\newcommand{\Inv}{^{-1}} 
\DeclareMathSymbol{\Gamma}{\mathord}{operators}{"00}
\DeclareMathSymbol{\Delta}{\mathord}{operators}{"01}
\DeclareMathSymbol{\Theta}{\mathord}{operators}{"02}
\DeclareMathSymbol{\Lambda}{\mathord}{operators}{"03}
\DeclareMathSymbol{\Xi}{\mathord}{operators}{"04}
\DeclareMathSymbol{\Pi}{\mathord}{operators}{"05}
\DeclareMathSymbol{\Sigma}{\mathord}{operators}{"06}
\DeclareMathSymbol{\Upsilon}{\mathord}{operators}{"07}
\DeclareMathSymbol{\Phi}{\mathord}{operators}{"08}
\DeclareMathSymbol{\Psi}{\mathord}{operators}{"09}
\DeclareMathSymbol{\Omega}{\mathord}{operators}{"0A}
\newcommand\smallop{
  \mathchoice
    {{\scriptstyle\mathcal{O}_p}}
    {{\scriptstyle\mathcal{O}_p}}
    {{\scriptscriptstyle\mathcal{O}_p}}
    {\scalebox{.7}{$\scriptscriptstyle\mathcal{O}_p$}}
  }
\theoremstyle{plain} 
\newtheorem{thm}{Theorem}
\theoremstyle{definition}
\theoremstyle{remark}
\newcommand{\prob}{\mathsf{P}} 
\newcommand{\cov}{\mathsf{C}}
\newcommand{\nm}{{\sf N}}
\newcommand{\expo}{{\sf Exp}}
\newcommand{\sn}{{\sf SN}}
\newcommand{\RR}{\mathbb{R}}
\newcommand{\TT}{\mathbb{T}}
\newcommand{\eps}{\varepsilon}
\newcommand{\sens}{\mathsf{sens}}
\newcommand{\spec}{\mathsf{spec}}
\title{Model-free posterior inference on the area under the receiver operating characteristic curve}
\author{Zhe Wang\footnote{Department of Statistics, North Carolina State University; {\tt zwang54@ncsu.edu}, {\tt rgmarti3@ncsu.edu}} \quad and \quad Ryan Martin$^*$}
\date{\today}
\begin{document}

\maketitle 

\begin{abstract}   
The area under the receiver operating characteristic curve (AUC) serves as a summary of a binary classifier's performance. 
Methods for estimating the AUC have been developed under a  binormality assumption which restricts the distribution of the score produced by the classifier.  However, this assumption introduces an infinite-dimensional nuisance parameter and can be inappropriate, especially in the context of machine learning.  This motivates us to adopt a model-free Gibbs posterior distribution for the AUC.  We present the asymptotic Gibbs posterior concentration rate, and a strategy for tuning the learning rate so that the corresponding credible intervals achieve the nominal frequentist coverage probability. Simulation experiments and a real data analysis demonstrate the Gibbs posterior's strong performance compared to existing methods based on a rank likelihood.  

\smallskip

\emph{Keywords and phrases:} credible interval; Gibbs posterior; generalized Bayesian inference; model misspecification; robustness.
\end{abstract}

\section{Introduction}
\label{S:intro}  
First proposed during World War II to assess the performance of radar receiver operators \citep{cali2015some}, the receiver operating characteristic (ROC) curve is now an essential tool for analyzing the performance of binary classifiers in areas such as  signal detection \citep{green1966signal}, psychology examination \citep{swets1973relative,swets1986indices}, radiology \citep{lusted1960logical,hanley1982meaning}, medical diagnosis \citep{swets1982evaluation,hanley1989receiver}, and data mining \citep{spackman1989signal,fawcett2006introduction}. One informative summary of the ROC curve is the corresponding area under the curve (AUC).  This measure provides an overall assessment of classifier's performance, independent of the choice of threshold, and is, therefore, the preferred method for evaluating classification algorithms \citep{provost1997analysis,provost1998case,bradley1997use,huang2005using}. The AUC is an unknown quantity, and our goal is to use the information contained in the data to make inference about the AUC.  
The specific set up is as follows. For a binary classifier which produces a random \emph{score} to indicate the propensity for, say, Group~1; individuals with scores higher than a threshold are classified to Group 1, the rest are classified to Group~0. Let $U$ and $V$ be independent scores corresponding to Group~1 and Group~0, respectively. Given a threshold $t$, define the \emph{specificity} and \emph{sensitivity} as $\spec(t)=\prob(V<t)$ and $\sens(t)=\prob(U>t)$. Then the ROC curve is a plot of the parametric curve $\bigl(1-\spec(t),\sens(t)\bigr)$ as $t$ takes all possible values for scores. While the ROC curve summarizes the classifier's tradeoff between sensitivity and specificity as the threshold varies, the AUC measures the probability of correctly assigning scores for two individuals from two groups, which equals $\prob(U>V)$ \citep{bamber1975area}, and is independent of the choice of threshold. Consequently, the AUC is a functional of the joint distribution of $(U,V)$, denoted by $\prob$, so the ROC curve is actually not needed to identify AUC.  



In the context of inference on the AUC, when the scores are continuous, it is common to assume that $\prob$ satisfies a so-called {\em binormality} assumption, which states that there exists a monotone increasing transformation that maps both $U$ and $V$ to normal random variables  \citep{hanley1988robustness}.  For most medical diagnostic tests, where the classifiers are simple and ready-to-use without training, such an assumption serves well \citep{hanley1988robustness,metz1998maximum,cai2004semi}, although it has been argued that other distributions can be more appropriate for some specific tests \citep[e.g.,][]{guignard1983validity,goddard1990receiver}.  But for complicated classifiers which involve multiple predictors, as often arise in machine learning applications,  binormality---or any other model assumption for that matter---becomes a burden.  This motivates our pursuit of a ``model-free'' approach to inference about the AUC.

Specifically, our goal is the construction of a type of posterior distribution for the AUC.  The most familiar such construction is via Bayes's formula, but this requires a likelihood function and, hence, a statistical model.  The only way one can be effectively ``model-free'' within a Bayesian framework is to make the model extra flexible, which requires lots of parameters.  In the extreme case, a so-called Bayesian nonparametric approach would take the distribution $\prob$ itself as the model parameter.  When the model includes lots of parameters, then the analyst has the burden of specifying prior distributions for these, based on little or no genuine prior information, and also computation of a high-dimensional posterior.  But since the AUC is just a one-dimensional feature of this complicated set of parameters, there is no obvious return on the investment into prior specification and posterior computation.  A better approach would be to construct the posterior distribution for the AUC directly, using available prior information about the AUC only, without specifying a model and without the introduction of artificial model parameters.  That way, the data analyst can avoid the burdens of prior specification and posterior computation, bias due to model misspecification, and issues that can arise as a result of non-linear marginalization \citep[e.g.,][]{martin2016false,fraser2011bayes}.  

As an alternative to the traditional Bayesian approach, we consider here the construction of a so-called {\em Gibbs posterior} for the AUC.  In general, the Gibbs posterior construction proceeds by defining the quantity of interest as the minimizer of a suitable risk function, treating an empirical version of that loss function like a negative log-likelihood, and then combining with a prior distribution very much like in Bayes's formula.  General discussion of Gibbs posteriors can be found in \citet{zhang2006eps, zhang2006information}, \citet{bissiri2016general} and \citet{alquier2016properties}; statistical applications are discussed in \citet{jiang2008gibbs} and \citet{syring2017gibbs, syring2017calibrating, syring.martin.image}.   Again, the advantage is that the Gibbs posterior avoids model misspecification bias and the need to deal with unimportant nuisance parameters.  Moreover, under suitable conditions, Gibbs posteriors can be shown to have desirable asymptotic concentration properties, with theory that parallels that of Bayesian posteriors under model misspecification \citep[e.g.,][]{kleijn2006misspecification, kleijn2012bernstein}.  

A subtle point is that, while the risk minimization problem that defines the quantity of interest is independent of the scale of the loss function, the Gibbs posterior is not.  This scale factor is often referred to as the {\em learning rate}  \citep[e.g.,][]{grunwald2012safe} and, because it controls the spread of the Gibbs posterior, its specification needs to be handled carefully.  Various approaches to the specification of the learning rate parameter  \citep[e.g.,][]{grunwald2012safe,grunwald2017inconsistency,bissiri2016general,holmes2017assigning,lyddon2019general}.  Here we adopt the approach in
\citet{syring2017calibrating} that aims to set the learning rate so that, in addition to its robustness to model misspecification and asymptotic concentration properties, the Gibbs posterior credible sets have the nominal frequentist coverage probability. 
When the sample size is large, we recommend an (asymptotically) equivalent calibration method that is simpler to compute.

The present paper is organized as follows. In Section~\ref{S:binormality}, we review some methods for making inference on the AUC based on the binormality assumption, in particular, the Bayesian approach in \citet{gu2009bayesian} that involves a suitable rank-based likelihood. 
In Section~\ref{S:counter example}, we argue that the binormality assumption is generally inappropriate in machine learning applications, and provide one illustrative example involving a support vector machine.  This difficulty with model specification leads us to the Gibbs posterior, a model-free alternative to a Bayesian posterior, which is reviewed in Section~\ref{SS:gibbs}. 
We develop the Gibbs posterior for inference on the AUC, derive its asymptotic concentration properties, and investigate how to properly scale the risk function in Section~\ref{S:main}. Simulation experiments are carried out in Section~\ref{S:examples}, where a Gibbs posterior estimator performs favorably compared with the Bayesian approach based on a rank-based likelihood.  We also apply the Gibbs posterior on a real dataset for evaluating the performance of a biomarker for pancreatic cancer and compare our result with those based on the rank likelihood. Finally, we give some concluding remarks in Section~\ref{S:discuss}.

\section{Background}
\label{S:background}

\subsection{Binormality and related methods}
\label{S:binormality}
Following \citet{hanley1988robustness}, the scores $U$ and $V$ satisfy the binormality assumption if their distribution functions
are
$\Phi[b^{-1}\{H(u) - a\}]$ and $\Phi\{H(v)\}$ respectively, where $a>0$, $b>0$, $H$ is a monotone increasing function, and $\Phi$ denotes the $\nm(0,1)$ distribution function, which implies 
that $U$ and $V$ can be transformed to $\nm(a,b^2)$ and $\nm(0,1)$ via $H$.
If $\prob=\prob_{a,b,H}$ denotes the distribution of $(U,V)$ under this assumption, then  
the ROC curve and the AUC, respectively, are given by
$t \mapsto \Phi[b^{-1}\{a+\Phi\Inv(t)\}]$ and 
\begin{align}\label{eq:binorm_AUC}
    \Phi\{a(b^2+1)^{-1/2}\}.
\end{align} 
Even though $H$ is not needed to define the AUC---only $(a,b)$---since the joint distribution of $(U,V)$ does depend on $H$, any likelihood-based method would have to deal with this infinite-dimensional nuisance parameter.  Some strategies are used to avoid dealing with $H$ directly.  The semi-parametric approach in \citet{cai2004semi} manipulates the equivalent
densities ratio of $U$ over $V$ and  $W$ over $Z$, and introduces cumulative hazard function as a nuisance parameter. A profile likelihood is obtained based on a discrete estimate for the cumulative hazard function.
In the approach of \citet{metz1998maximum}, data are suitably grouped and a multinomial pseudo-likelihood is constructed.  Alternatively, since data ranks are invariant to monotone transformations, one can construct a rank-based likelihood, as in \citet{zou2000two}, which can be maximized over $(a,b)$ to estimate the AUC.  But it turns out that a Bayesian approach that uses Monte Carlo sampling from a rank-based posterior distribution, as in \citet{gu2009bayesian}, is computationally more efficient than maximizing the rank likelihood.  Since this is our proposed method's primary competitor, we give some details about Gu and Ghosal's Bayesian rank-based likelihood approach here.  

Consider the transformed scores $W=H(U)$ and $Z=H(V)$, according to the binormality assumption, its joint distribution can be written as $\prob_{a,b}$, no more dependence on $H$.
Elimination of the nuisance parameter $H$ is desirable, but $(W,Z)$ are unavailable to us without knowledge of $H$.  That is, unless we consider a function of $(U,V)$ that is invariant to transformations by $H$.  A good candidate function is the {\em ranks}.  That is, let $R_{U,V}$ denote the ranks of the vector $(U_1,\ldots,U_m, V_1, \ldots, V_n)$, where $(U_1,\ldots,U_m)$ and $(V_1,\ldots,V_n)$ are independent and identically distributed (iid) copies of $U$ and $V$, respectively.  Then 
\begin{equation}
\label{eq:rank likeli}
\prob_{a,b,H}(R_{U,V}=r) \equiv \prob_{a,b}(R_{W,Z}=r), 
\end{equation}
where $R_{W,Z}$ is the ranks of $(W_1,\ldots,W_m,Z_1,\ldots,Z_n)$, with $W_i=H(U_i)$ and $Z_j = H(V_j)$.  The key is that the observed ranks based on the $(U_i, V_j)$ sample can be plugged in for $r$ on the right-hand side of \eqref{eq:rank likeli} and that gives a likelihood function for $(a,b)$, without requiring knowledge of $H$.  Of course, this is not a proper likelihood function, i.e., there is loss of information caused by throwing away the values of $(U_i, V_j)$, but eliminating the infinite-dimensional nuisance parameter might be worth the price, especially when the goal is inference on the ROC curve or AUC, neither of which depend directly on $H$.  
The approach outlined in \citet{gu2009bayesian} proceeds by treating the $(W_i,Z_j)$ values as latent variables and defining a full posterior for $(a, b^2, W_i,Z_j)$, given $R_{U,V}$, and then marginalizing out $(W_i,Z_j)$ to get a posterior distribution for $(a,b^2)$ alone.  
If we take the Jeffreys prior for $(a,b^2)$, which is proportional to $b^{-2}$, then the full conditional distribution presented in \citet{gu2009bayesian} are 
\begin{align*}
(a \mid W_1,\dots, W_m,Z_1,\dots, Z_n, b^2, R_{U,V}) & \sim 
\nm\bigl( m^{-1} \textstyle\sum_{i=1}^m W_i, b^2 m^{-1} \bigr), \\
(b^2 \mid W_1,\dots, W_m,Z_1,\dots, Z_n, a, R_{U,V}) & \sim 
{\sf IG} \bigl( \tfrac{m-1}{2} ,\tfrac{1}{2} \textstyle\sum_{i=1}^m (W_i-a)^2 \bigr), \\
(W_i\mid W_{-i}, Z_1,\ldots,Z_n, a, b^2, R_{U,V}) & \sim \nm(a, b^2) \times 1(R_{W,Z} = R_{U,V}), \quad i=1\ldots m \\
(Z_j\mid Z_{-j}, W_1,\ldots,W_m,a,b^2,R_{U,V}) & \sim \nm(0, 1) \times 1(R_{W,Z} = R_{U,V}), \quad j=1\ldots n
\end{align*}
where, e.g., $W_{-i}=(W_1,\ldots,W_{i-1},W_{i+1},\ldots,W_m)$, ${\sf IG}(\alpha,\beta)$ denotes the inverse gamma distribution with density $\frac{\beta^\alpha}{\Gamma(\alpha)} x^{-\alpha-1} e^{-\beta /x}$, and $1(\cdot)$ denotes the indicator function. With these full conditionals, it is straightforward to develop a Monte Carlo strategy that produces samples from the $(a,b^2)$ posterior distribution.  These samples can then be used to get a posterior distribution for AUC using the expression in \eqref{eq:binorm_AUC}.

\subsection{Validity of binormality in machine learning applications}
\label{S:counter example}

Before the ROC and AUC analysis were introduced to machine learning area, the binormality assumption had been proposed and used in the context of medical diagnosis for {\em simple classifiers}, where the scores $U$ and $V$ are determined based on a single predictor variable.  When assuming binormality for classifiers in machine learning \citep[e.g.,][]{brodersen2010binormal,macskassy2004confidence}, the situation differs because multiple predictor variables are usually involved.

Suppose that a binary $Y\in\{0,1\}$ indicates the group, $X\in \mathbb{R}^p$ is the predictors, and $\mathcal{M}_{X,Y}$ denotes the joint distribution of $(X, Y)$.
As described in Section \ref{S:intro}, a binary classifier provides a parametric form of the predictor, namely the \emph{score} $S(X;\beta)$, to indicate the propensity for $Y$ taking value 1. 
A training process is generally needed for estimating
the unknown $\beta$ based on a set of observations $\{X_i, Y_i\}_1^n$.
Let the estimator be denoted as $\hat{\beta}_n=\hat{\beta}_n(\{X_i,Y_i\}_1^n) \in \RR^p$. 
It follows that the random score $U$ for Group~1 is defined as  $S(X;\hat{\beta}_n)$ where the predictor $X$ follows the conditioned distribution $\mathcal{M}_{X|Y=1}$. Similarly, the random score $V$ for Group~0 is defined as  $S(X;\hat{\beta}_n)$ where $X\sim\mathcal{M}_{X|Y=0}$.
By assuming that $\hat{\beta}$ converges to a non-random quantity when the size of the training set goes to infinity, $U$ and $V$ are asymptotically independent.

The binormality assumption for simple classifiers, which are special cases where $p=1$ and $S(X;\beta)=X$,
only requires the $\mathcal{M}_{X|Y=1}$ and $\mathcal{M}_{X|Y=0}$ to be normals (after the transformation $H$).
In the general case, where $p>1$, even if every one of the $p$ predictors obeys the binormality assumption, the scores for two groups are still not guaranteed to satisfy the binormality, since $S(X;\beta)$ can take virtually any form.  

For example, consider two 
independent and identically distributed predictors $X_1$ and $X_2$,
given different groups ($Y=1$ or $0$), the predictors are distributed as $\nm(0,2)$ or $\nm(0,50)$, respectively.
For training data (Figure~\ref{fig:svm_example}(a)), $10,000$ copies of $(X_1,X_2,Y)$ are generated for each group. 
A support vector machine with radial basis function kernel is applied to this non-linearly separable dataset
and correspondingly the predicted scores for another $10,000$ new data copies under the same data generating scheme are recorded as $\{U_1,\ldots, U_m\}$ and $\{V_1,\ldots, V_n\}$.
Then the unique monotone increasing transformation $H$ which transforms $V$ to be standard normal 
is approximated by $H = \Phi^{-1} \circ \hat F_V$, where 
$\hat{F}_V(v)=n^{-1}\sum_{j=1}^n 1(V_j\leq v)$ is the empirical distribution.
The histogram of $H(U)$ in Figure~\ref{fig:svm_example}(b) does not agree with the fitted normal density. And a Q-Q plot in Figure~\ref{fig:svm_example}(c) for $U$ samples also suggest 
there is no such $H$ which transforms $U$ and $V$ to a model that satisfies the binormality assumption.


\begin{figure}
  \centering
  \subfigure[]{\includegraphics[width=0.3\textwidth,height=5.5cm]{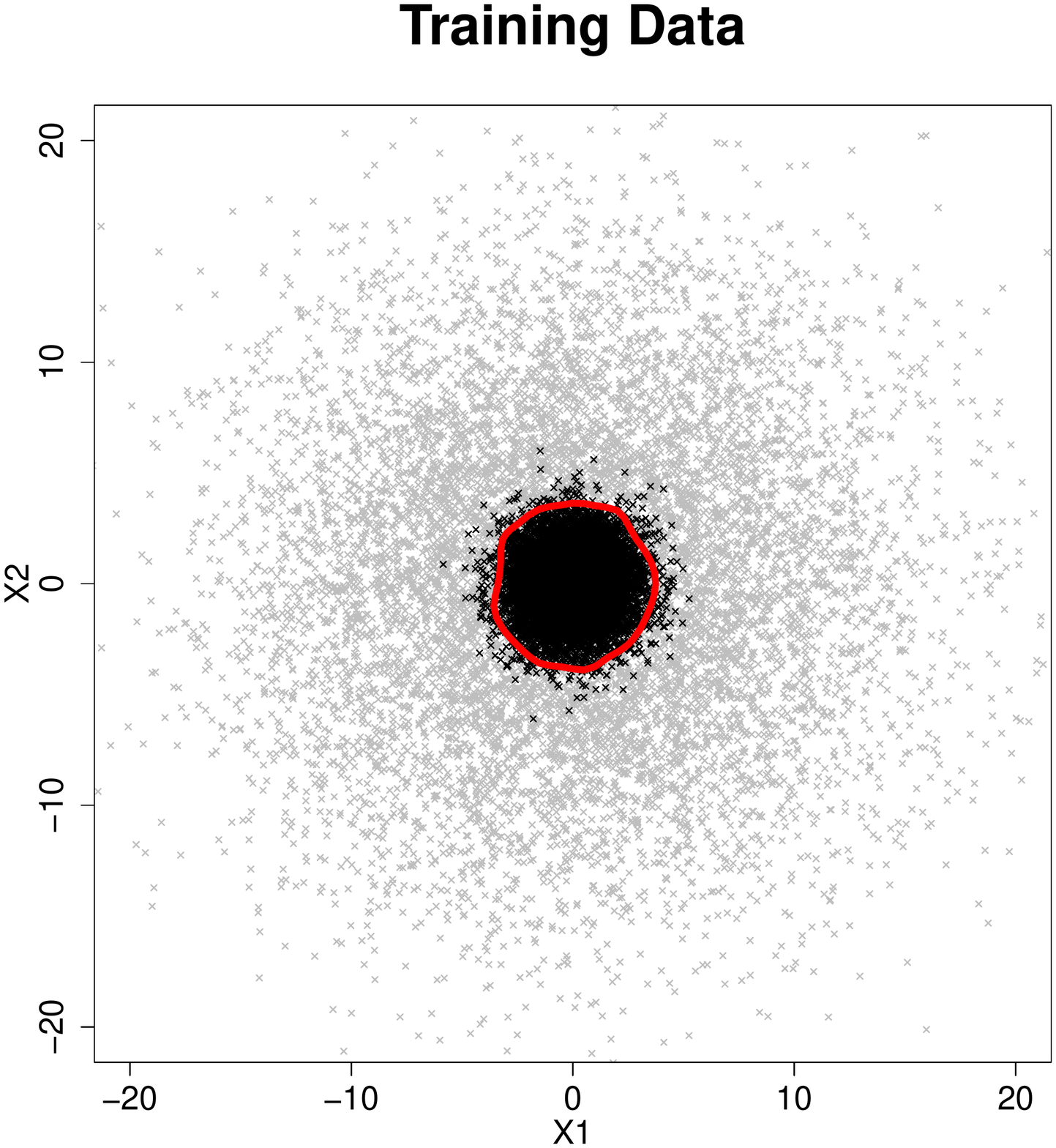}} 
  \subfigure[]{\includegraphics[width=0.3\textwidth,height=5.5cm]{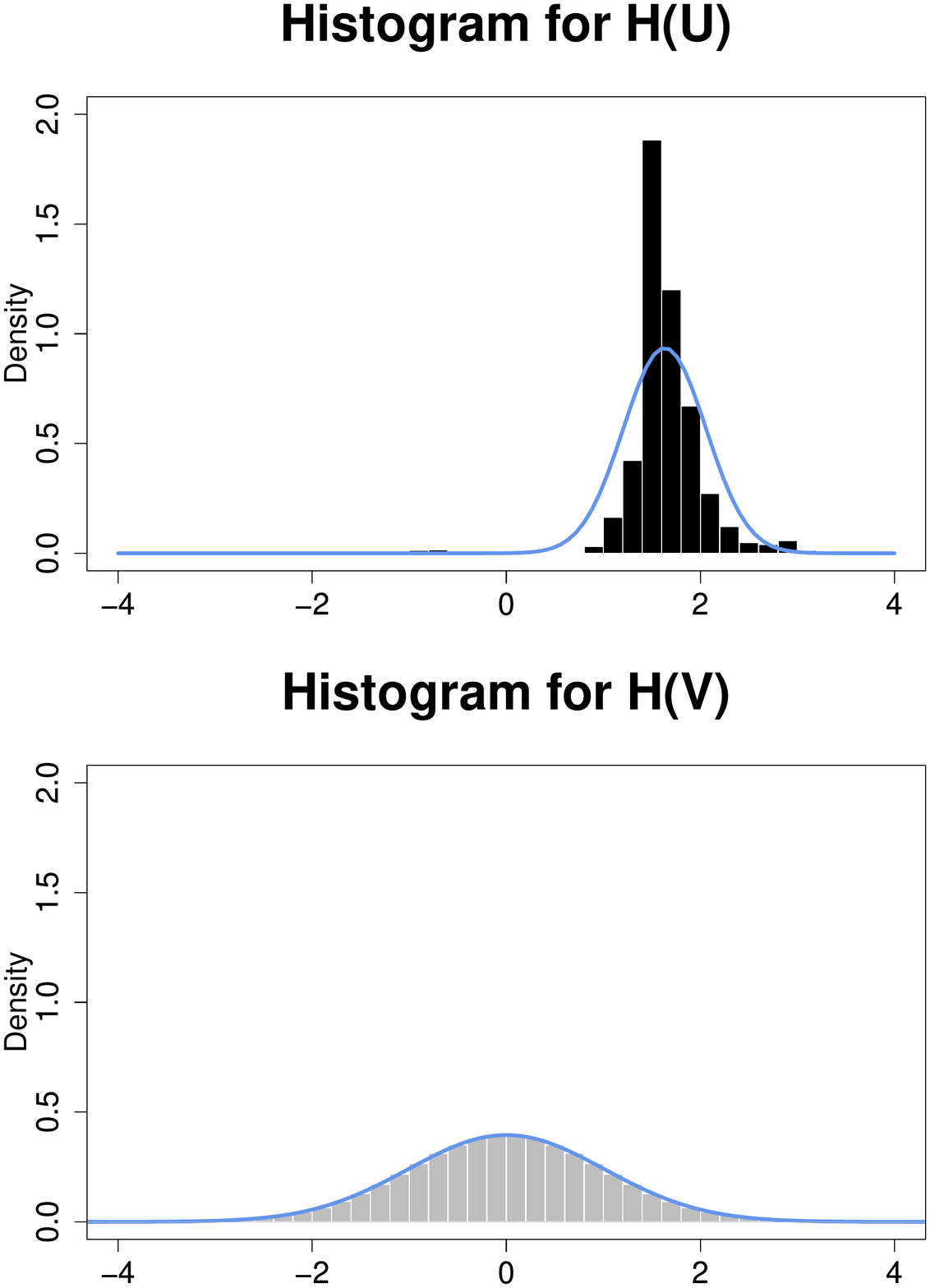}}
  \subfigure[]{\includegraphics[width=0.3\textwidth,height=5.5cm]{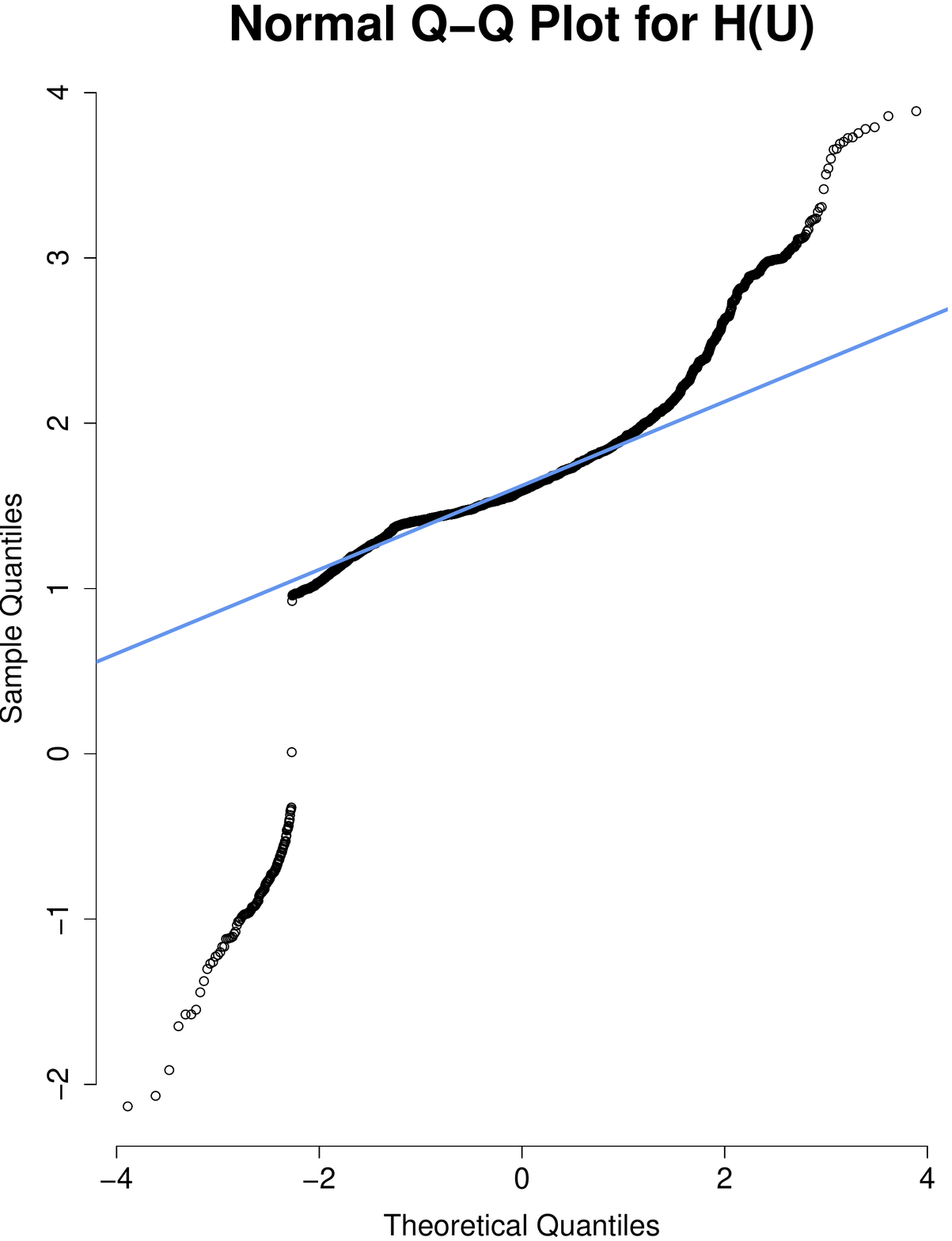}}
  \caption{
  (a) Training data and the SVM decision boundary (red curve). For each point, two predictors  $(X_1,X_2)$ are plotted on axes and $Y$ is visualized by the color (black for Group~1, gray for Group~0); 
  (b) Histograms for $H(U)$ and $H(V)$ with the fitted normal densities;
  (c) Q-Q plot for transformed samples $\{H(U_1),\ldots, H(U_m)\}$}
  \label{fig:svm_example}
\end{figure}
   
\subsection{Gibbs posterior distributions}
\label{SS:gibbs}

A Gibbs posterior distribution resembles a Bayesian posterior, but is constructed using different ingredients.  In particular, the Gibbs posterior does not start with a statistical model and likelihood, it starts with a more general connection between data and quantities of interest, through a loss function.  Suppose that data $T_1,\ldots,T_n$ are identically distributed $\TT$-valued observations from distribution $\prob$, and that there is some functional $\theta=\theta(\prob)$, taking values in $\Theta$, about which inference is desired.  Instead of introducing a statistical model for $\prob$---that is, assuming $\prob$ takes a particular distributional form $\prob_\zeta$ for some model parameter $\zeta$, and then expressing $\theta$ as a function of $\zeta$---we construct a posterior for $\theta$ directly as follows.  Assume that there exists a loss function $\ell_\theta(t)$, mapping $\TT \times \Theta$ to $\RR$, such that the true value, $\theta^\star$, of $\theta$ solves the optimization problem 
\begin{equation}
\label{eq:truth}
\theta^\star = \arg\min_\theta R(\theta), 
\end{equation}
where the risk function $R(\theta) = \prob \ell_\theta$ is just the expected loss with respect to $\prob$.  When the quantity of interest is defined as the solution to an optimization problem, it makes sense to estimate that quantity by solving an empirical version of the optimization problem, 
\[ \hat\theta_n = \arg\min_\theta R_n(\theta), \]
where the empirical risk $R_n(\theta) = \widehat\prob_n \ell_\theta$ is the expected loss with respect to the empirical distribution $\widehat\prob_n = n^{-1} \sum_{i=1}^n \delta_{T_i}$, with $\delta_t$ the point-mass distribution concentrated at $t$. 
From this empirical risk function, the Gibbs posterior distribution is defined as 
\begin{align}
\label{eq:GP_general}
\Pi_n(d\theta) \propto e^{-\omega n R_n(\theta)} \, \Pi(d\theta), \quad \theta \in \Theta, 
\end{align}
where $\Pi$ is a prior distribution on $\Theta$ and $\omega > 0$ is a scale parameter to be determined; see \citet{bissiri2016general} for the decision-theoretic underpinnings of this approach.  

For us, the motivation behind the use of a Gibbs posterior is that it gives us direct, model-free posterior inference about the quantity of interest.  This is beneficial because, for one thing, a statistical model could be misspecified and that would generally bias the results.  But even if the model is correctly specified, it is unlikely that an appropriate statistical model could be described in terms of $\theta$ alone, so the model index $\zeta$ would include a number of nuisance parameters that require prior distribution specification and posterior computation, efforts that are effectively wasted if marginal inference on $\theta$ is the goal.  The Gibbs posterior, by targeting $\theta$ directly, avoids the possible misspecification bias, allows for prior beliefs about $\theta$ to be readily accommodated, and does not require dealing with nuisance parameters.  And the applications presented in \citet{syring2017gibbs, syring2017calibrating, syring.martin.image}, along with the one presented here, suggest that this direct approach has a number of important advantages over the more traditional Bayesian counterpart.  

Of course, the magnitude of the loss function does not affect the solution to the optimization problem in \eqref{eq:truth}, nor that in the empirical version thereof.  But the magnitude does affect the Gibbs posterior in \eqref{eq:GP_general}, which is why we include the scaling factor $\omega$.  Data-driven strategies for specifying this tuning parameter are discussed in Section~\ref{SS:calibrate} below.


\section{Gibbs posterior for the AUC}
\label{S:main}

\subsection{Definition}
As mentioned, the AUC is a functional of the joint distribution $\prob$ of $(U,V)$, i.e., $\theta = \theta(\prob)$, given by $\theta = \prob(U > V)$.  Recall that the data consists of independent copies $(U_1,\ldots,U_n)$ and $(V_1,\ldots,V_m)$ of $U$ and $V$, respectively.  To construct a Gibbs posterior distribution for $\theta$ as discussed above, we need an appropriate loss function.  That is, we need a function $\ell_\theta(u,v)$ such that the corresponding risk function, $R(\theta) = \prob \ell_\theta$, is minimized at the true AUC, $\theta^\star$.  If we define 
\[ \ell_\theta(u,v) = \{\theta - 1(u > v)\}^2, 
\quad \theta \in [0,1], \]
then it is easy to check that 
\[ R(\theta) = \theta^2 -2 \theta^\star \theta + \theta^{\star 2}, \]
and, moreover, that this risk function is uniquely minimized at $\theta=\theta^\star$.  Then the empirical risk function is
\[ R_{m,n}(\theta) = \widehat{\prob}_{m,n} \ell_\theta = \frac{1}{mn} \sum_{i=1}^m \sum_{j=1}^n \{\theta - 1(U_i > V_j)\}^2 \]
where $\widehat{\prob}_{m,n}=(mn)^{-1}\sum_{i=1}^m \sum_{j=1}^n\delta_{(U_i,V_j)}$ is the empirical distribution of the score pairs. Note that the minimizer of the empirical risk function, namely, 
\begin{equation}
\label{eq:Mann-Whitney}
\hat\theta_{m,n} = \arg\min_\theta R_{m,n}(\theta) = \frac{1}{mn} \sum_{i=1}^n \sum_{j=1}^m 1(U_i > V_j),
\end{equation}
is the familiar statistic suggested by \citet{mann1947test} for testing if one of two independent random variables is stochastically larger than the other.  

Following the general approach described in Section~\ref{SS:gibbs}, we can construct a Gibbs posterior distribution for the AUC, with density 
\[ \pi_{m,n}(\theta) \propto e^{-\omega mn R_{m,n}(\theta)} \, \pi(\theta), \quad \theta \in [0,1], \]
where $\pi$ is some prior density for the AUC, and $\omega$ is the learning rate to be specified in Section~\ref{SS:calibrate}.  This Gibbs posterior does not require any model assumptions, does not require marginalization over nuisance parameters, and can directly incorporate available prior information about $\theta$.  Moreover, the Gibbs posterior is approximately centered around $\hat\theta_{m,n}$, which is a quality estimator of the AUC, regardless of what form the underlying distribution $\prob$ takes, so we can expect the Gibbs posterior---for suitable $\omega$---to provide quality model-free inference.  Details on the asymptotic concentration properties of the Gibbs posterior are presented in the next section.  

After some simple algebra, the Gibbs posterior above can be re-expressed as 
\begin{equation}
\label{eq:GP_AUC}
\pi_{m,n}(\theta) \propto e^{-\omega mn (\theta - \hat\theta_{m,n})^2} \pi(\theta), \quad \theta \in [0,1], 
\end{equation}
which shows some resemblance to a truncated normal distribution.  A very reasonable choice of prior is a truncated normal distribution with informative choices of prior location $\mu_0$ and scale $\sigma_0$.  With this choice, the Gibbs posterior is a truncated normal distribution too, with corresponding location and scale, respectively,  
\[ \mu_{m,n} = \frac{\mu_0 + 2\omega \sigma_0^2 mn \hat\theta_{m,n}}{1 + 2 \omega \sigma_0^2 mn} \quad \text{and} \quad \sigma_{m,n} = \Bigl\{ \frac{\sigma_0^2}{1 + 2 \omega \sigma_0^2 mn} \Bigr\}^{1/2}. \]
In the absence of prior information about the AUC, one can take a flat uniform prior, $\pi(\theta) \equiv 1$, in which case the Gibbs posterior is still a truncated normal distribution but with location and scale, respectively, 
\[ \mu_{m,n} = \hat\theta_{m,n} \quad \text{and} \quad \sigma_{m,n} = (2\omega m n)^{-1/2}. \]
In practice, we recommend the use of any available prior information about the AUC whenever possible, but, for the rest of this paper, we will work with the Gibbs posterior based on the default uniform prior.  

Here we are concerned with inference on AUC for a given classifier, and consequently the posterior is constructed directly for the AUC.  \citet{ridgway2014pac} also construct a Gibbs posterior using AUC, but their goal is to find a classifier that maximizes AUC.  

\subsection{Asymptotic concentration properties}
\label{S:converge rate}

It is natural to ask what kind of asymptotic concentration properties the Gibbs posterior distribution enjoys.  An advantage of our approach's simplicity is the ease in which the convergence properties can be deduced, but some care is needed in formulating the asymptotic regime precisely.  Indeed, since the two groups may have different sample sizes, it is clear that what we need is for the {\em smaller} of the two sample sizes to go to infinity.  Therefore, the rate is determined by $m \wedge n$, and following theorem states that, under no conditions on the joint distribution $\prob$ of $(U,V)$, the Gibbs posterior distribution concentrates asymptotically around the true AUC at the rate $(m \wedge n)^{-1/2}$.  



\begin{thm}
\label{thm:rate}
Let $\theta^\star$ be the true AUC corresponding to the joint distribution $\prob$, and assume, without loss of generality, that $n = m \wedge n$.  If $\Pi_{m,n}$ is the Gibbs posterior defined in \eqref{eq:GP_AUC} based on a fixed learning rate $\omega > 0$ and a prior density $\pi$ that is positive and continuous in an interval containing $\theta^\star$, then for any sequence $K_n \to \infty$, 
\[ \Pi_{m,n}(\{\theta:|\theta-\theta^\star|> K_{n} \,(m\wedge n)^{-1/2} \}) \to 0 \quad \text{in $\prob$-probability as $n \to \infty$}. \]
\end{thm}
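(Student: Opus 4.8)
The plan is to leverage the (essentially Gaussian) form \eqref{eq:GP_AUC} of the Gibbs posterior together with the fact that its empirical center $\hat\theta_{m,n}$, being the Mann--Whitney statistic in \eqref{eq:Mann-Whitney}, is root-$(m\wedge n)$ consistent for $\theta^\star$. Two elementary inputs are needed. First, $\E\hat\theta_{m,n}=\theta^\star$, and decomposing the variance of the double sum in \eqref{eq:Mann-Whitney} into its four types of cross-terms — those indexed by disjoint pairs vanish, while the three remaining groups each contribute $O(1)$ times their cardinalities $mn$, $mn(n-1)$, $m(m-1)n$ — gives $\mathrm{Var}(\hat\theta_{m,n}) = O(m^{-1}+n^{-1}) = O((m\wedge n)^{-1})$, so Chebyshev yields $\hat\theta_{m,n}-\theta^\star = O_p((m\wedge n)^{-1/2})$ and in particular $\hat\theta_{m,n}\to\theta^\star$ in $\prob$-probability. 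Second, since $\pi$ is positive and continuous on some interval around $\theta^\star$, there are constants $c>0$ and $\delta>0$ with $\pi\ge c$ on $(\theta^\star-\delta,\theta^\star+\delta)$ (and we may take $\delta$ small enough that this interval lies in $[0,1]$).

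Next I would bound the relevant posterior mass, on a high-probability event, by a deterministic sequence. Put $\epsilon_n = \tfrac12 K_n (m\wedge n)^{-1/2}$ and define $A_{m,n}=\{|\hat\theta_{m,n}-\theta^\star|\le \epsilon_n\}$ and $B_{m,n}=\{|\hat\theta_{m,n}-\theta^\star|\le \delta/2\}$; by the previous paragraph both have $\prob$-probability tending to $1$. On $A_{m,n}$ the triangle inequality gives $\{\theta:|\theta-\theta^\star|>K_n(m\wedge n)^{-1/2}\}\subseteq\{\theta:|\theta-\hat\theta_{m,n}|>\epsilon_n\}$, so it suffices to control $\Pi_{m,n}(\{|\theta-\hat\theta_{m,n}|>\epsilon_n\})$. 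Its numerator is at most $e^{-\omega mn \epsilon_n^2}\int_0^1\pi = e^{-\omega mn \epsilon_n^2}$ by monotonicity of $x\mapsto e^{-\omega mn x^2}$ (this needs no boundedness of $\pi$, only that it integrates to one). For the denominator, on $B_{m,n}$ the interval $(\hat\theta_{m,n}-\delta/2,\hat\theta_{m,n}+\delta/2)$ sits inside $(\theta^\star-\delta,\theta^\star+\delta)$, whence $\int_0^1 e^{-\omega mn(\theta-\hat\theta_{m,n})^2}\pi(\theta)\,d\theta \ge c\int_{-\delta/2}^{\delta/2} e^{-\omega mn s^2}\,ds \ge \tfrac12 c\sqrt{\pi}\,(\omega mn)^{-1/2}$ for all large $n$. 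Hence on $A_{m,n}\cap B_{m,n}$,
\[ \Pi_{m,n}(\{\theta:|\theta-\theta^\star|>K_n(m\wedge n)^{-1/2}\}) \le \frac{2}{c\sqrt{\pi}}\,(\omega mn)^{1/2}\, e^{-\omega mn \epsilon_n^2}. \]

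Finally I would verify this deterministic bound vanishes. Since $n=m\wedge n$ we have $mn\,\epsilon_n^2 = \tfrac14 K_n^2\,(mn/n) = \tfrac14 K_n^2 m$, and $(\omega mn)^{1/2}\le \omega^{1/2}m$ (again using $n\le m$), so the bound is at most $\tfrac{2\omega^{1/2}}{c\sqrt\pi}\, m\, e^{-\frac14 \omega K_n^2 m}$, which tends to $0$ because $K_n\to\infty$ forces $\tfrac14\omega K_n^2\ge 1$ eventually while $m\ge n\to\infty$. Bounding the posterior probability trivially by $1$ on $(A_{m,n}\cap B_{m,n})^c$ and taking expectations gives convergence to $0$ in $L^1(\prob)$, hence in $\prob$-probability, which is the claim. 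I expect the only mild subtlety to be the variance computation for $\hat\theta_{m,n}$ — correctly identifying which of the four families of cross-terms vanish and confirming the survivors total $O(m^{-1}+n^{-1})$ — and, relatedly, checking that when $m\gg n$ the extra concentration supplied by the $mn$ appearing in the exponent of \eqref{eq:GP_AUC} never works against us (it does not: $mn/(m\wedge n)=m\vee n\ge m\wedge n$); the rest is routine bookkeeping with the Gaussian kernel.
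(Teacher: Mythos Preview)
Your proof is correct and takes a genuinely different route from the paper's. The paper first reduces, by a brief heuristic, to the flat-prior case, then applies Chebyshev's inequality \emph{to the Gibbs posterior itself}, bounding $\Pi_{m,n}(|\theta-\theta^\star|>K_n n^{-1/2})$ by $nK_n^{-2}\{V_{m,n}+(M_{m,n}-\theta^\star)^2\}$ and plugging in the explicit mean $M_{m,n}$ and variance $V_{m,n}$ of the truncated normal to show this is $O_p(K_n^{-2})$. You instead bound the posterior mass directly as a ratio of integrals: an exponential tail bound $e^{-\omega mn\epsilon_n^2}$ on the numerator and a Gaussian-integral lower bound on the normalizer coming from the local positivity of $\pi$. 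Your approach handles the general prior cleanly without the reduction step, and delivers a sharper (exponentially decaying in $K_n^2 m$) deterministic bound rather than the paper's polynomial $O_p(K_n^{-2})$; the paper's approach, in turn, leans more on the closed truncated-normal form and makes the posterior-mean/variance structure explicit, which is convenient if one later wants a Bernstein--von Mises statement. Both arguments rest on the same probabilistic input $\hat\theta_{m,n}-\theta^\star=O_p((m\wedge n)^{-1/2})$; the paper reaches it via the crude bound $\mathrm{Var}(\hat\theta_{m,n})\le \theta^\star(1-\theta^\star)(m+n)/(mn)$, which is equivalent to your four-term covariance decomposition.
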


\begin{proof}
See Appendix~A. 
\end{proof}

Several remarks on the concentration rate theorem, its consequences, and some related results are in order.  
\begin{itemize}
\item 
The convergence in $\prob$-probability conclusion in Theorem~\ref{thm:rate} can be strengthened to convergence with $\prob$-probability $1$ by assuming that sample sizes for two groups increase at the same rate, i.e., $m(m+n)^{-1} \to \rho\in(0,1)$.  Under this condition, \citet[][Chap.~3.2]{korolyuk2013theory} show that $\hat\theta_{m,n} \to \theta^\star$ with $\prob$-probability 1 and, with this, the stronger Gibbs posterior concentration rate result can be proved along lines similar to those in Appendix~A below.  

\item 
As shown in \eqref{eq:GP_general}, the Gibbs posterior resembles a Bayesian posterior based on a suitably misspecified model, one whose ``likelihood function'' equals $\exp\{-\omega n R_n(\theta)\}$.  Even in misspecified cases, Bernstein--von Mises-style distributional approximations are possible; see, e.g., \citet{kleijn2012bernstein}.  In our case, we immediately see a truncated normal form of the Gibbs posterior, so as long as $\theta^\star$ is in the interior of $(0,1)$, the asymptotic normality of the Gibbs posterior is automatic.  

\item
We note the loss scale $\omega$ controls 
the proportion of information in the Gibbs posterior which is learned from the data. Consequently, it is reasonable to adjust $\omega$ so that a set of observations with a larger size is given more trust. In fact, if we substitute the fixed $\omega$ in Theorem~\ref{thm:rate} with a sequence $\omega_n$ that vanishes slower than $(m \vee n)^{-1}$, 
then the Gibbs posterior concentration rate result still holds.  
\end{itemize}

\subsection{Tuning the learning rate}
\label{SS:calibrate}

The good behavior of a Bayesian posterior is guaranteed only when the model is correctly specified. Under misspecification, even if the posterior concentrate around an efficient estimator, the asymptotic variance of the posterior could be drastically different from that of the efficient estimator; see \citet{kleijn2012bernstein}. 
Consequently, $100(1-\alpha)$\% credible regions from a misspecified Bayes model may not achieve the nominal $100(1-\alpha)$\% confidence, even asymptotically.  
Fortunately, the Gibbs posterior learning rate parameter, $\omega$, which controls the spread, can be tuned in such a way that this undesirable discrepancy between credibility and confidence is avoided.  Various tuning strategies are available in the literature \citep[e.g.,][]{bissiri2016general, fasiolo2017fast, lyddon2019general, grunwald2012safe}, but only the approach presented in \citet{syring2017calibrating} focuses directly on coverage probability, so that is the approach we will adopt here.  

Algorithm~1 describes the calibrating procedure from \citet{syring2017calibrating} in the context of inference on the AUC.  The rationale behind this algorithm is as follows.  Take a $100(1-\alpha)$\% credible interval based on the Gibbs posterior \eqref{eq:GP_AUC} with learning rate $\omega$, in particular, the highest posterior density credible interval.  Then the frequentist coverage probability of that credible interval, call it $c_\alpha(\omega)$, depends on $\omega$, $\alpha$, and other things.  If we could evaluate $c_\alpha(\omega)$, that is, if we knew and could directly simulate from $\prob$, then we could just solve the equation $c_\alpha(\omega) = 1-\alpha$.  For future reference, in this ideal case, we call the solution to this equation the {\em oracle learning rate}.  In real applications, however, $\prob$ is unknown, so we cannot evaluate $c_\alpha(\omega)$ exactly, but we can get an estimate using the bootstrap, and then solve that equation using stochastic approximation \citep{robbins1951stochastic} with step size sequence $(\kappa_t)$ that satisfies
\begin{equation}
\label{eq:steps}
\textstyle\sum_{t=1}^\infty \kappa_t = \infty \quad \text{and} \quad \textstyle\sum_{t=1}^\infty \kappa_t^2 < \infty
\end{equation}
Details are discussed in \citet{syring2017calibrating}.  


The method implemented in Algorithm~1 requires the repeated processing of bootstrap samples and, therefore, can be computationally expensive when the sample sizes are large.  For such cases, however, there is an alternative strategy, based on ideas in \citet{lyddon2019general}, that is both easier and faster, while still providing approximate calibration in the sense above.  The idea is that we want the Gibbs posterior variance to be roughly equal to the variance of its center/mode, which is the Mann--Whitney estimator $\hat\theta_{m,n}$.  Under the additional assumption that 
\[ \lambda = \lim_{m,n \to \infty} \frac{m}{m+n} \in (0,1), \]
\citet[][Theorem~7.3]{hoeffding1948} showed that the asymptotic variance of $\hat\theta_{m,n}$ is 
\begin{equation}
\label{eq:mh.variance}
\frac{1}{m+n} \Bigl( \frac{\tau_{10}}{\lambda}+\frac{\tau_{01}}{1-\lambda} \Bigr), 
\end{equation}
where 
\[ \tau_{10} = \cov\{1(U_1 > V_1), 1(U_1 > V_2)\} \quad \text{and} \quad \tau_{01} = \cov\{1(U_1 > V_1), 1(U_2 > V_1)\}, \]
with $\cov$ the covariance operator under joint distribution $\prob$.  If we take the flat prior in our Gibbs posterior construction, then choosing
\begin{equation}
\label{eq:optimal_omega}
\hat\omega_n = \frac{m+n}{2mn}
\Bigl(\frac{\hat\tau_{10}}{\lambda}+\frac{\hat\tau_{01}}{1-\lambda} \Bigr)^{-1}, 
\end{equation}
with the obvious estimates 
\begin{align*}
\hat\tau_{10} & = \frac{2}{mn(n-1)} \sum_{i=1}^m \sum_{j \neq j'} 1(U_i > V_j) \, 1(U_i > V_{j'}) - \hat\theta_{m,n}^2 \\
\hat\tau_{01} & = \frac{2}{n m (m-1)} \sum_{j=1}^n \sum_{i \neq i'} 1(U_i > V_j) \, 1(U_{i'} > V_j) - \hat\theta_{m,n}^2, 
\end{align*}
will make the Gibbs posterior variance approximately match the Mann--Whitney estimator variance, thus, approximate calibration.  But note that our numerical results in Section~\ref{S:examples} below are all based on the calibration strategy in Algorithm~1.

\begin{algorithm}[t]
\SetAlgoLined
\KwData{$U_1,\dots,U_m$ and $V_1,\dots,V_n$}
\KwIn{Prior distribution; estimate $\hat\theta_{m,n}$ from \eqref{eq:Mann-Whitney}; bootstrap sample size $B$; tolerance $\eps > 0$; and step sizes $(\kappa_t)$ satisfying \eqref{eq:steps}.}
\KwOut{An estimate of the learning rate, $\hat{\omega}_n$.} 
\medskip 
Generate bootstrap samples $U_1^{(b)},\dots,U_m^{(b)}$ and $V_1^{(b)},\dots,V_n^{(b)}$, for $b=1,\dots,B$.\\
Initialize $\omega^{(1)}$ and set $t=1$.\\
\Repeat{$|\Delta| < \eps$}{
  $\omega=\omega^{(t)}$\;
  \For{$b$ in $1\dots B$ }{
   Calculate $\text{HPD}_\omega^{(b)}$, the $100(1-\alpha)$\% highest Gibbs posterior density credible interval, with learning rate $\omega$, based on the $b^\text{th}$ bootstrap sample. 
  }
  Estimate the coverage probability $\hat c_\alpha(\omega)=
  B^{-1} |\{b: \text{HPD}_\omega^{(b)}\ni\hat{\theta}_{m,n}\}|$\;
  Set $\Delta=\hat c_\alpha(\omega)-(1-\alpha)$\;
  Update $\omega^{(t+1)}=\omega + \kappa_t \, \Delta$\;
  Set $t=t+1$\;
 }
 Return $\hat{\omega}_n=\omega^{(t)}$. 
 \caption{Gibbs posterior calibration}
\end{algorithm}

\section{Numerical examples}
\label{S:examples}

\subsection{Simulation studies}

Since the AUC is invariant when random variables $U$ and $V$ undergo the same monotone increasing transformation, we fix the distribution of $V$ to be standard normal and consider four examples for the distribution of $U$:
\begin{description}
\item[\it Example 1.] $U \sim \nm(2,1)$ and $\theta^\star=0.9214$;
\item[\it Example 2.] $U\sim \sn(3,1,-4)$---skew normal---and $\theta^\star=0.9665$;
\item[\it Example 3.] $U\sim 0.2\nm(-1,1)+0.8\nm(2,0.5^2)$ and $\theta^\star=0.8185$;
\item[\it Example 4.] $U\sim 2-\expo(1)$ and $\theta^\star=0.7895$.
\end{description}
Figure~\ref{fig:Example_setting} provides a visualization of the two densities in each of the four examples.  Note that these four examples capture binormality, a slight violation of binormality, a bimodal case, and one where $U$ and $V$ have different supports.  

\begin{figure}[t]
  \centering
  \subfigure[Example~1]{\includegraphics[width=0.4\textwidth,height=5cm]{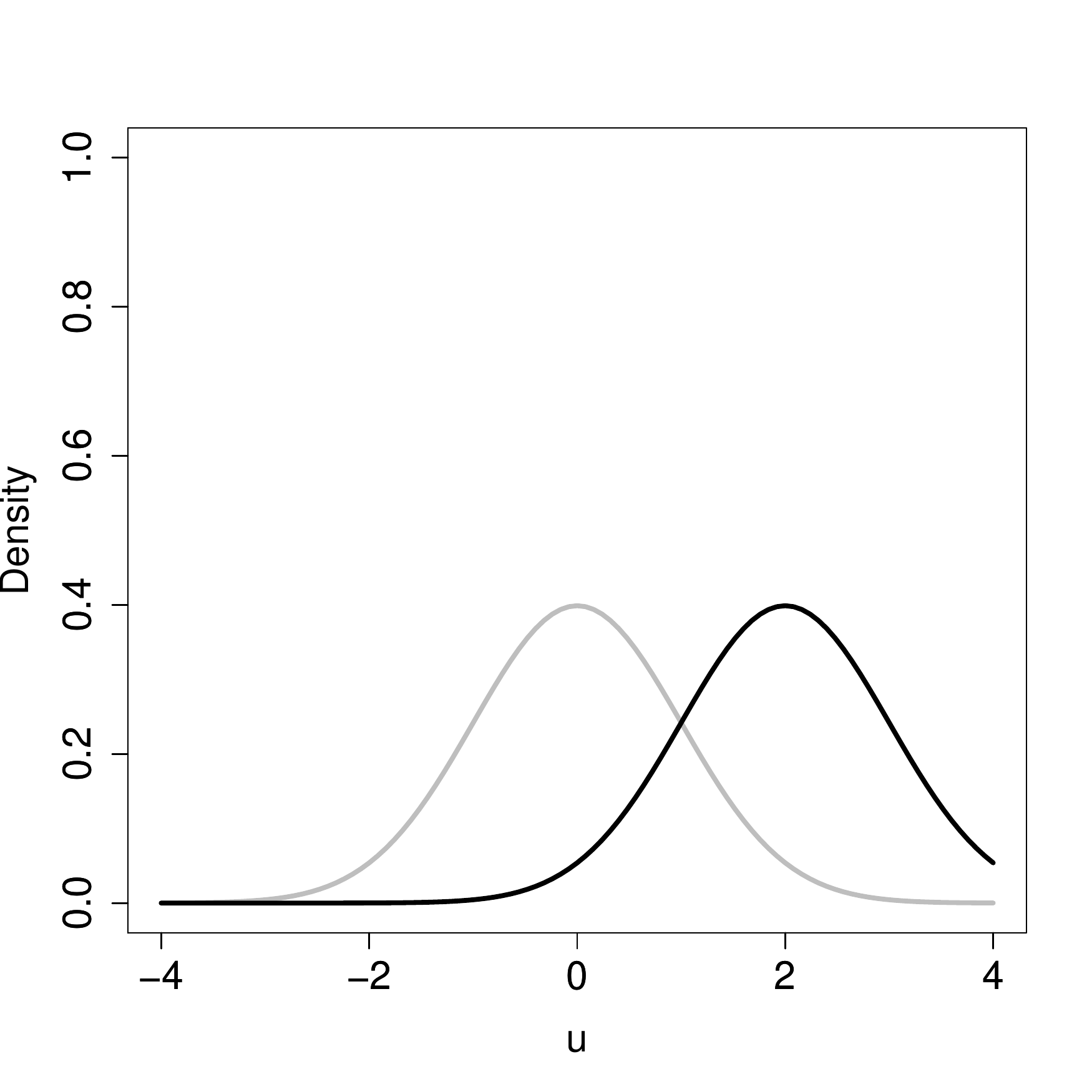}}
  \subfigure[Example~2]{\includegraphics[width=0.4\textwidth,height=5cm]{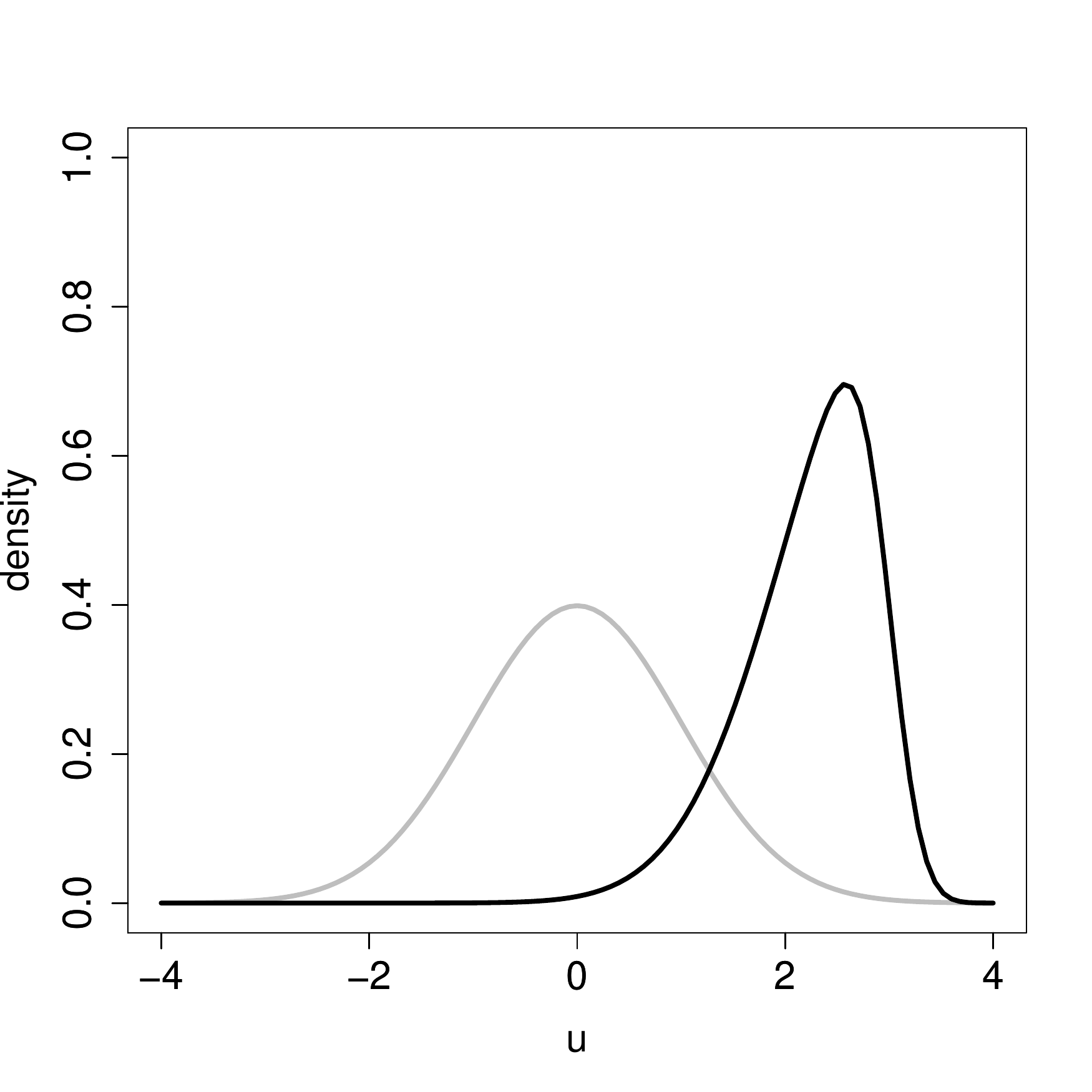}} \\ 
  \subfigure[Example~3]{\includegraphics[width=0.4\textwidth,height=5cm]{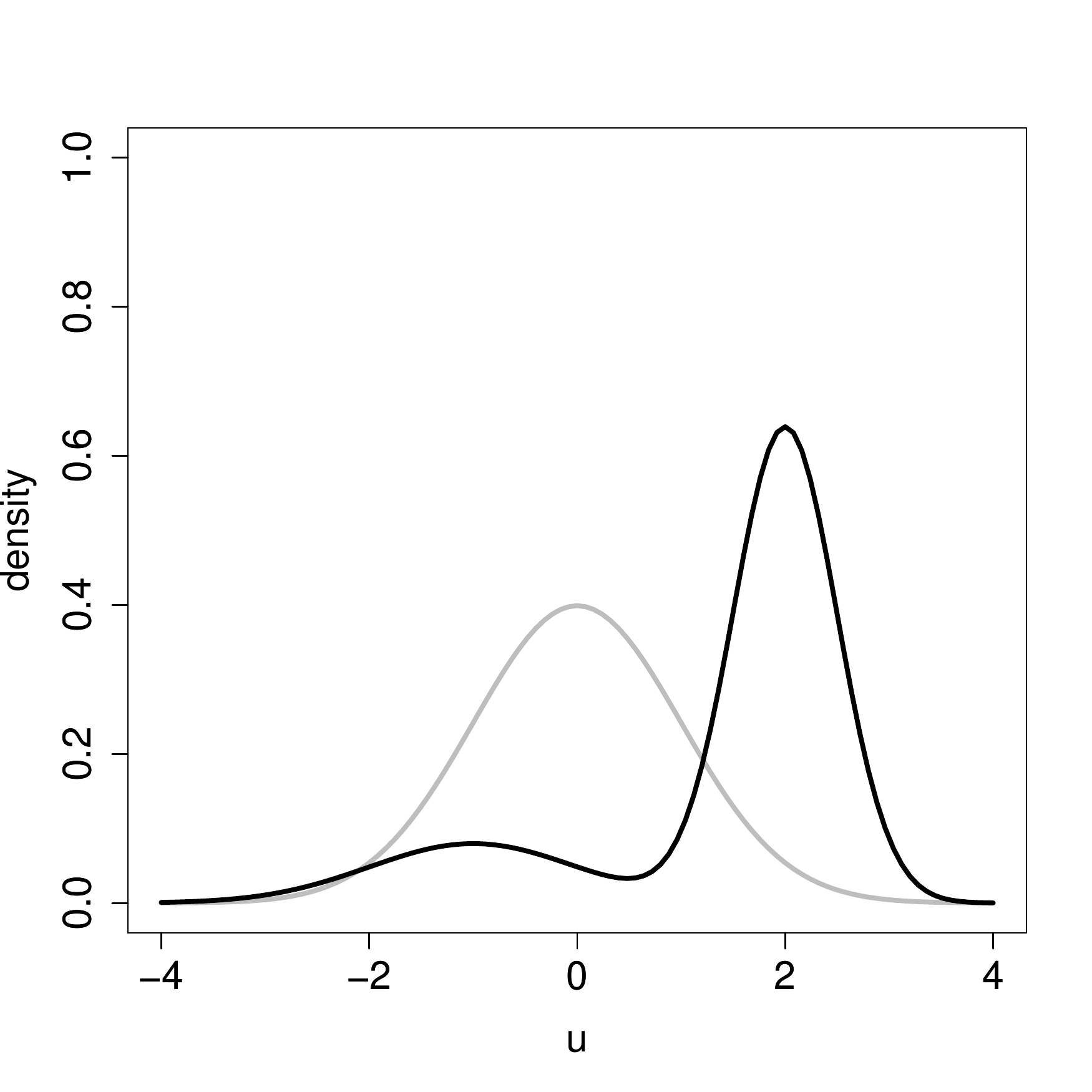}}
  \subfigure[Example~4]{\includegraphics[width=0.4\textwidth,height=5cm]{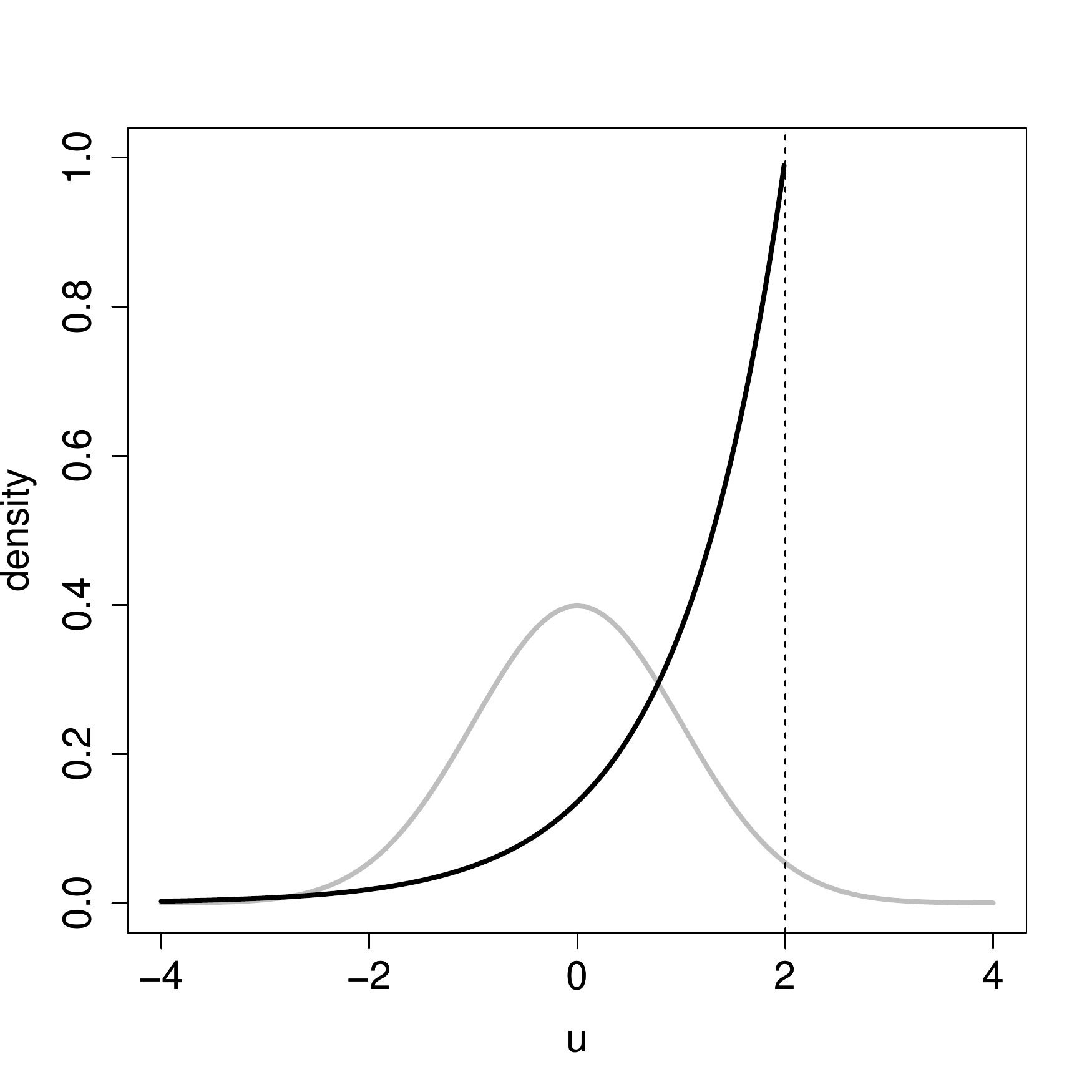}}
  \caption{Density for $V$ (black line) and the standard normal density for $U$ (gray line) in Examples~1--4.}
  \label{fig:Example_setting}
\end{figure}

Here we compare the performance of the Gibbs posterior with the misspecified Bayesian model based on the rank-likelihood (BRL).  For the Gibbs posterior, we use flat prior and follow Algorithm~1, where $B=1000$ bootstrap samples are generated and $\kappa_t=(t+1)^{-0.51}$, which satisfies \eqref{eq:steps}.  For the BRL, 50000 MCMC posterior samples are drawn, with burn-in of 10000.
Tables~\ref{tab:E1}--\ref{tab:E4} present (absolute) bias of the posterior estimator, average posterior standard deviation, average length of $95\%$ credible interval, and the corresponding coverage probability based on $1000$ replications, with increasing observation sizes $m=n=25,50,75,100,125$, for the four examples, respectively.

As can be seen from the bias and standard error columns, both the Gibbs and BRL posteriors concentrate around the true AUC, but the former---thanks to its built-in robustness---tends to have a smaller bias than the latter.  The averaged credible interval length for BRL is slightly smaller than that for the Gibbs posterior, at least when the sample size is large, but at the cost of having unacceptably low coverage probability.  Specifically, for large sample size, the $95\%$ credible intervals from the Gibbs posterior have coverage near the target level 0.95, while the corresponding BRL credible interval tend to under-cover, sometimes severely.  Such a result is also demonstrated in \citet{gu2009bayesian}.  A possible explanation is that the posterior mean of BRL converge to $\theta^\star$ but at a slower speed than the vanishing posterior spread.

\begin{table}[]
\begin{center}
\begin{tabular}{ccccccccc}
\hline
& \multicolumn{2}{c}{Bias} & \multicolumn{2}{c}{Standard Error} & \multicolumn{2}{c}{Mean Length} & \multicolumn{2}{c}{Coverage Prob.} \\ \cline{2-9} 
$n$ & Gibbs & BRL & Gibbs & BRL & Gibbs & BRL & Gibbs & BRL\\ \hline
 25&0.002&0.016&0.035&0.043&0.134&0.165&0.902&0.972\\
 50&0.000&0.007&0.026&0.026&0.103&0.102&0.922&0.931\\
 75&0.000&0.003&0.021&0.020&0.084&0.076&0.939&0.894\\
100&0.000&0.003&0.018&0.016&0.070&0.063&0.935&0.879\\
125&0.001&0.010&0.017&0.014&0.067&0.055&0.940&0.857\\\hline
\end{tabular}
\caption{Gibbs posterior versus BRL for Example~1.}
\label{tab:E1}
\end{center}
\end{table}

\begin{table}[]
\begin{center}
\begin{tabular}{ccccccccc}
\hline
& \multicolumn{2}{c}{Bias} & \multicolumn{2}{c}{Standard Error} & \multicolumn{2}{c}{Mean Length} & \multicolumn{2}{c}{Coverage Prob.} \\ \cline{2-9} 
$n$ & Gibbs & BRL & Gibbs & BRL & Gibbs & BRL & Gibbs & BRL\\ \hline
 25&0.005&0.022&0.020&0.035&0.072&0.132&0.997&0.949\\
 50&0.001&0.006&0.015&0.017&0.058&0.065&0.912&0.904\\
 75&0.000&0.001&0.013&0.012&0.051&0.047&0.919&0.902\\
100&0.000&0.002&0.012&0.010&0.046&0.040&0.931&0.907\\
125&0.000&0.004&0.011&0.009&0.043&0.036&0.944&0.861\\\hline
\end{tabular}
\caption{Gibbs posterior versus BRL for Example~2.}
\label{tab:E2}
\end{center}
\end{table}

\begin{table}[]
\begin{center}
\begin{tabular}{ccccccccc}
\hline
& \multicolumn{2}{c}{Bias} & \multicolumn{2}{c}{Standard Error} & \multicolumn{2}{c}{Mean Length} & \multicolumn{2}{c}{Coverage Prob.} \\ \cline{2-9} 
$n$ & Gibbs & BRL & Gibbs & BRL & Gibbs & BRL & Gibbs & BRL\\ \hline
 25&0.002&0.020&0.065&0.064&0.255&0.246&0.919&0.922\\
 50&0.002&0.016&0.046&0.044&0.180&0.173&0.933&0.900\\
 75&0.000&0.011&0.037&0.035&0.145&0.138&0.921&0.887\\
100&0.000&0.008&0.032&0.030&0.126&0.117&0.936&0.897\\
125&0.001&0.003&0.029&0.027&0.113&0.104&0.934&0.890\\\hline
\end{tabular}
\caption{Gibbs posterior versus BRL for Example~3.}
\label{tab:E3}
\end{center}
\end{table}

\begin{table}[]
\begin{center}
\begin{tabular}{ccccccccc}
\hline
& \multicolumn{2}{c}{Bias} & \multicolumn{2}{c}{Standard Error} & \multicolumn{2}{c}{Mean Length} & \multicolumn{2}{c}{Coverage Prob.} \\ \cline{2-9} 
$n$ & Gibbs & BRL & Gibbs & BRL & Gibbs & BRL & Gibbs & BRL\\ \hline
 25&0.000&0.025&0.066&0.063&0.258&0.243&0.925&0.902\\
 50&0.000&0.024&0.045&0.043&0.176&0.168&0.937&0.844\\
 75&0.001&0.020&0.037&0.033&0.144&0.130&0.930&0.788\\
100&0.000&0.003&0.032&0.028&0.125&0.109&0.942&0.861\\
125&0.000&0.020&0.029&0.026&0.112&0.100&0.938&0.803\\\hline
\end{tabular}
\caption{Gibbs posterior versus BRL for Example~4.}
\label{tab:E4}
\end{center}
\end{table}

Finally, we investigate the learning rate estimates under the Gibbs setting.  Figure~\ref{fig:omega_plot} shows, for each of the four simulation examples, the oracle learning rate (red) compared to those obtained from Algorithm~1.  Recall, from Section~\ref{SS:calibrate}, the oracle learning rate corresponds to exact credibility--coverage matching, so the fact that the estimates based on Algorithm~1 closely follow the oracle is further indication that our Gibbs posterior is properly calibrated to achieve the desired coverage probability. Note, also, that the slope of the red line is roughly $-1$ which, on the log scale, agrees with the tolerable decay rate, $(m \vee n)^{-1}$, suggested by the general theory in Section~\ref{S:converge rate}.  


\begin{figure}[t]
  \centering
  \subfigure[Example~1]{\includegraphics[width=0.42\textwidth,height=5cm]{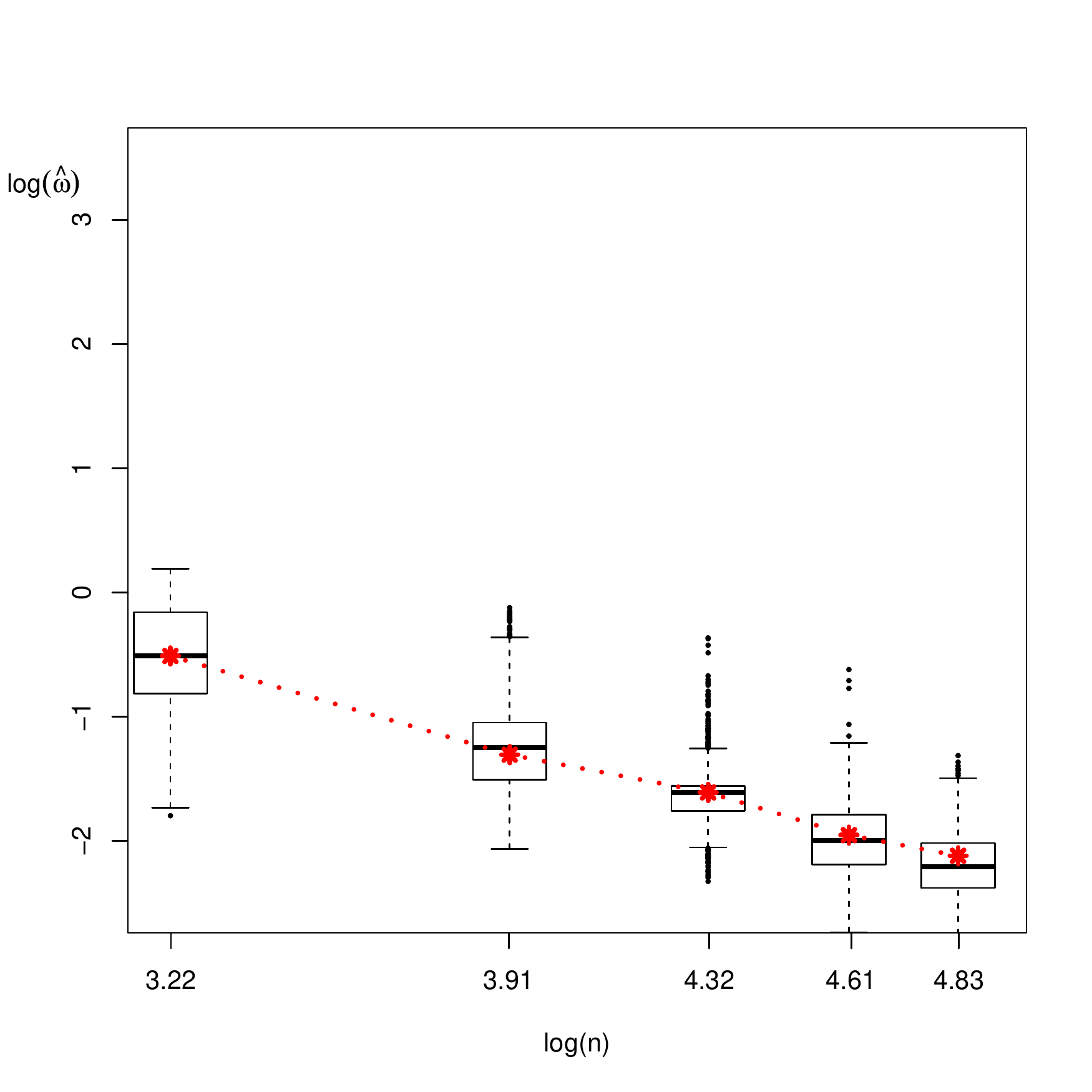}}
  \subfigure[Example~2]{\includegraphics[width=0.42\textwidth,height=5cm]{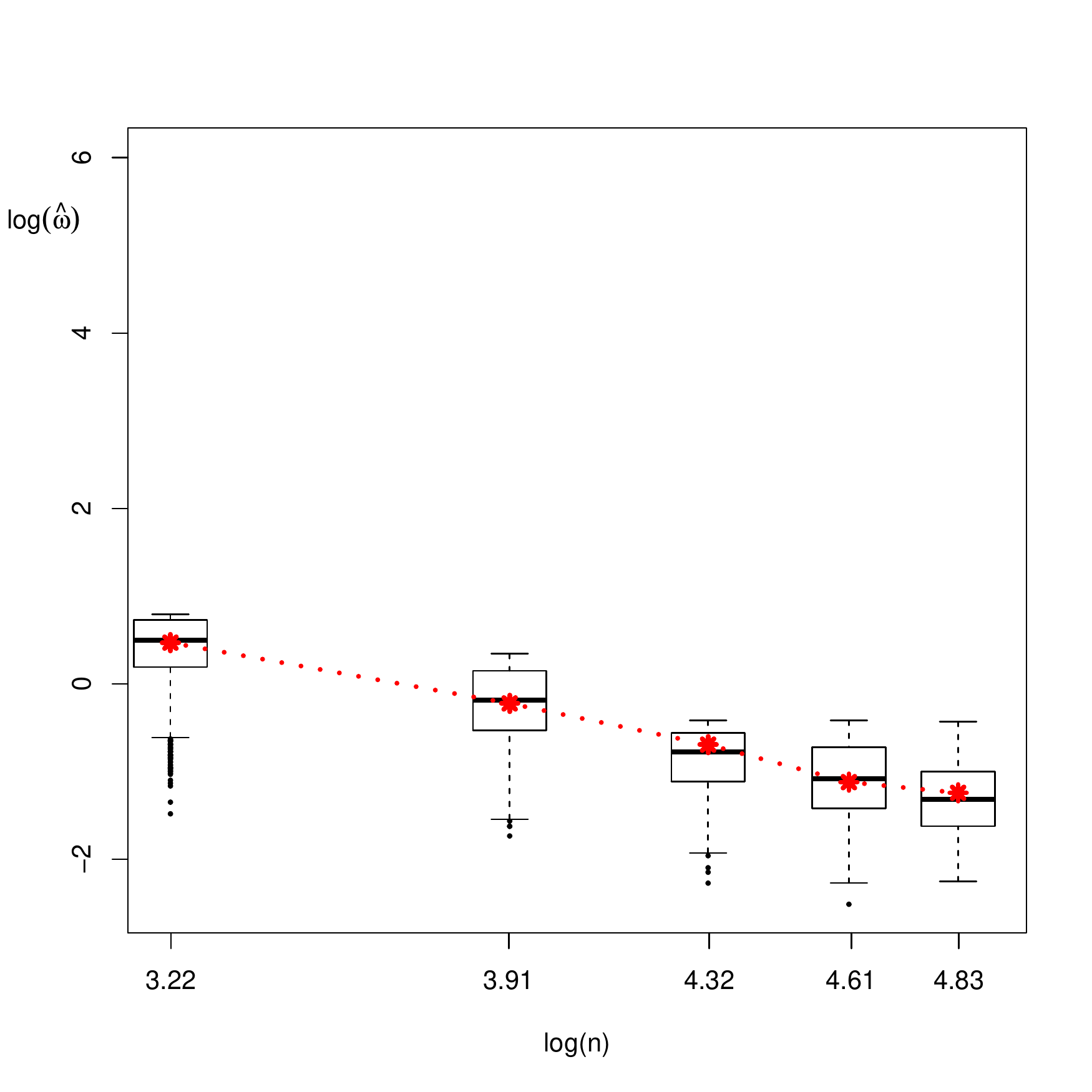}} 
  \subfigure[Example~3]{\includegraphics[width=0.42\textwidth,height=5cm]{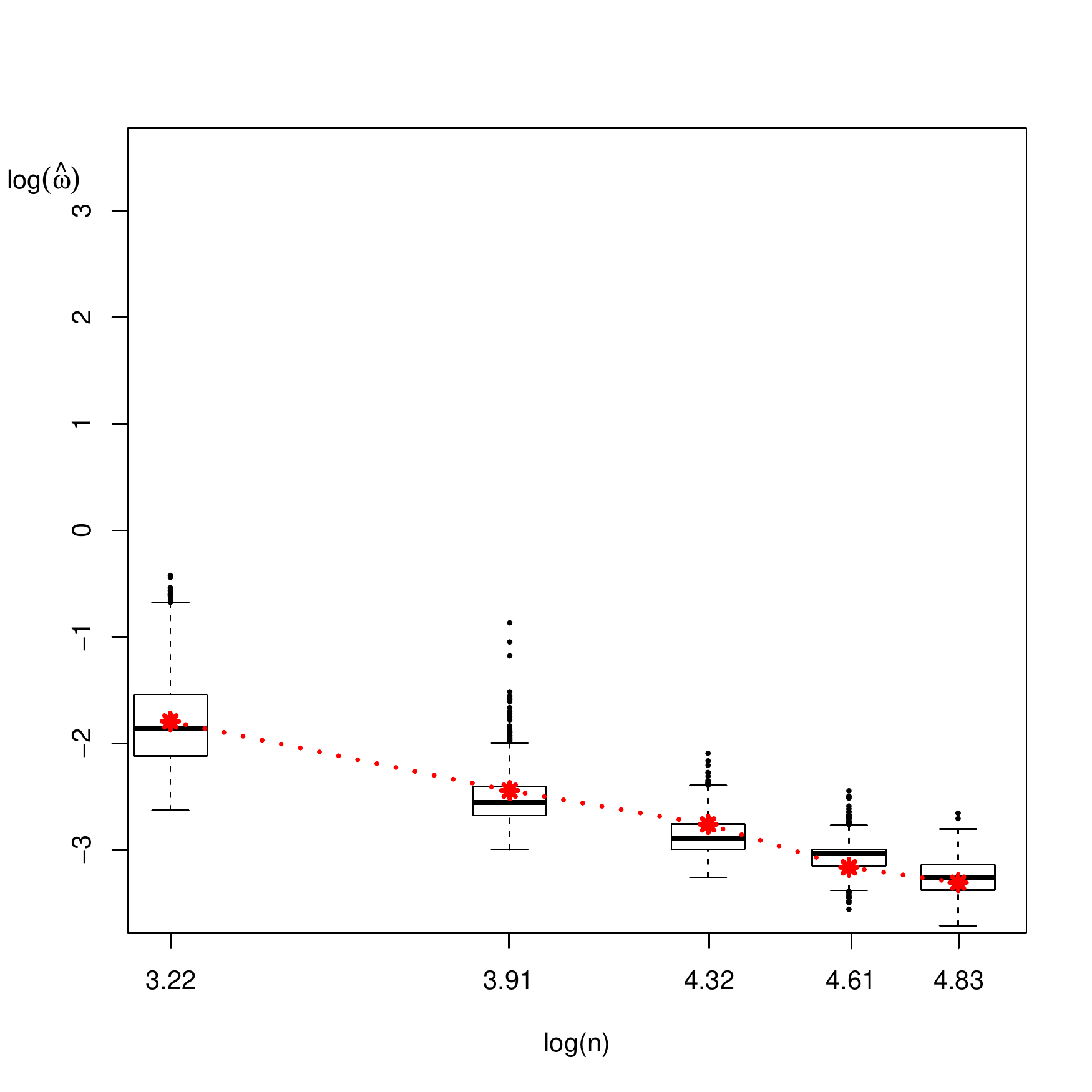}}
  \subfigure[Example~4]{\includegraphics[width=0.42\textwidth,height=5cm]{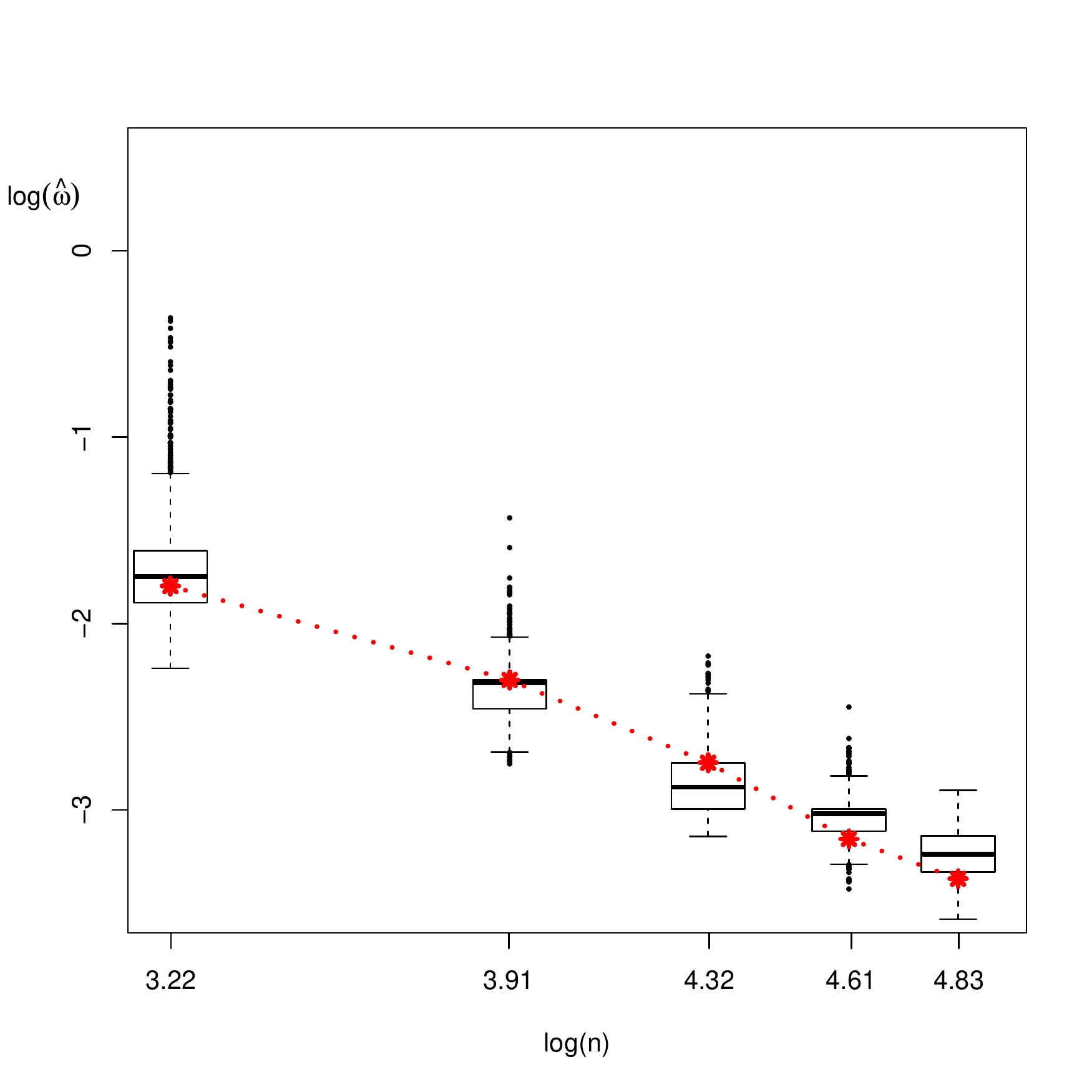}}
 \caption{Boxplots of the learning rate estimates from Algorithm~1 versus sample size, on the log scale, for Examples~1--4, with $n \in \{25, 50, 75, 100, 125\}$.  The red dotted line represents logarithm of the oracle learning rate defined in Section~\ref{SS:calibrate}.}  
  \label{fig:omega_plot}
\end{figure}

\subsection{Real data analysis}

Data consisting of serum measurements of two biomarkers for pancreatic cancer was published by \citet{wieand1989family}; see, also, the R package {\tt logcondens}.  This was a case-control study including $m=90$ subjects from the diseased group and $n=51$ subjects from the non-diseased group. Specifically,
we consider one biomarker, a cancer antigen (CA-125), and evaluate its performance as a classifier to distinguish the case group from the control group.
Table~\ref{fig:realdata} presents results from two Gibbs posteriors and two BRLs.
Gibbs1 and Gibbs2 employ Algorithm 1 with flat prior and truncated normal prior ($\text{location} = 0.75, \text{scale} = 0.9^2$), respectively. 
The two BRLs start the MCMC sampling with different initial values for $(a,b)$, namely, $(2,2)$ for BRL1 and $(3,2)$ for BRL2, respectively, and use $300000$ posterior samples with $5000$ burn-in.
The two Gibbs posteriors have estimates slightly larger than that from two BRLs, with comparable standard errors.  The two BRL credible intervals are slightly shorter than the Gibbs intervals but, in light of the simulation results presented above, especially in the case of relatively large samples like considered here, it is likely that the BRL intervals are ``too short,'' while the Gibbs intervals are not.  

\begin{table}[t]
\begin{center}
\begin{tabular}{ccccc}
\hline
 & Gibbs1 & Gibbs2 & BRL1 & BRL2 \\ \hline 
Posterior mean 
& 0.705 & 0.705 & 0.691 & 0.697 \\
Standard error 
& 0.045 & 0.046 & 0.046 & 0.041 \\
Credible interval
& (0.615, 0.795) & (0.615, 0.796) &  (0.598, 0.774) & (0.612, 0.775)\\ 
Learning rate
& 0.052 & 0.051 &  --- & --- \\\hline
\end{tabular}
\end{center}
\caption{Results for CA-125 based on two Gibbs posteriors and two BRLs.}
\label{fig:realdata}
\end{table}

\ifthenelse{1=1}{}{
\begin{table}[]
\begin{tabular}{cccccc}
\hline
 & Gibbs1 & Gibbs2 & Gibbs3 & BRL1 & BRL2 \\ \hline 
Posterior mean 
& .705 & 0.705 & .706 & .691 & .697 \\
Standard error 
& .045 & 0.046 & .036 & .046 & .041 \\
Credible interval
& (.615, .795) & (.615, .796) & (.636, .776) & (.598, .774) & (.612, .775)\\ 
$\omega$ estimate
& .052 & .051 & .086 & --- & --- \\\hline
\end{tabular}
\caption{Real data analysis for CA-125 for three Gibbs posteriors and two BRLs which takes different prior distributions}
\end{table}
}

\section{Conclusion}
\label{S:discuss}

In certain applications, the parameters of interest can be defined as minimizers of an appropriate risk function, separate from any statistical model.  In such cases, one can avoid potential model misspecification biases by working some kind of ``model-free'' approach.  The present paper considered one such example, namely, inference on the AUC, where the state-of-the-art statistical model is one that depends on an infinite-dimensional nuisance parameter.  As an alternative to switching to rank-based methods that ignore relevant features of the observed data, we propose to construct a Gibbs posterior distribution for direct inference on the AUC, without specifying a model or introducing any nuisance parameters.  This simplifies our computations and prior specifications, while allowing us to avoid potential model misspecification biases without sacrificing on the desirable asymptotic convergence properties.  Moreover, a strategy for tuning the Gibbs posterior's learning rate is recommended, that leads to credible intervals having the nominal frequentist coverage probability.  


A direct extension of our work here is the inference on the analog of AUC in settings that involve three-group classifiers, namely, the volume under the ROC surface, or VUS \citep[e.g.,][]{mossman1999three}.  Similar to the set up here for the AUC, the VUS is defined as $\prob(T>U>V)$, where $T$ is the score for the third group.  Then much of the work presented here can be immediately generalized to the VUS case.  

It would also be worthwhile to explore applications of the Gibbs posterior in other multivariate settings.  One example is inference on multivariate quantiles, which are typically defined as minimizers of some expected loss  \citep[e.g.,][]{chaudhuri1996geometric}, so the construction of a Gibbs posterior is both appealing and relatively simple.

\section*{Acknowledgments}

This work is partially supported by the U.S.~National Science Foundation, DMS--1811802.

\appendix

\section{Proof of Theorem~\ref{thm:rate}}

First, recall that, without loss of generality, we assume $n = m \wedge n$ and $n \to \infty$, which implies that $m=m_n \to \infty$ too.  Next, when $n$ (and, hence, $m$) is large, $\theta \mapsto \exp\{-\omega mn R_{m,n}(\theta)\}$ will blow up around $\theta=\hat\theta_{m,n}$ and, since the prior $\pi$ is fixed---and positive in an interval containing $\theta^\star$ and, hence, $\hat\theta_{m,n}$---the Gibbs posterior will be dominated by the empirical risk term.  Therefore, the prior does not affect the asymptotics so, for simplicity, we present the proof only for the case of a flat prior, $\pi(\theta) \equiv 1$.  

By Chebyshev's inequality and the bias--variance decomposition of mean square error, 
\begin{equation}
\label{eq:chebyshev}
\Pi_{m,n}(\{\theta: |\theta-\theta^\star| > K_n n^{-1/2}\}) \leq \frac{n}{K_n^2} \{ V_{m,n} + (M_{m,n} - \theta^\star)^2 \}, 
\end{equation}
where $M_{m,n}$ and $V_{m,n}$ are the mean and variance of the Gibbs posterior distribution, respectively, and are given by 
\begin{align*}
M_{m,n} & = \hat{\theta}_{m,n}+
\sigma_{m,n}\frac{\phi(A_{m,n})-\phi(B_{m,n})}{\Phi(B_{m,n})-\Phi(A_{m,n})} \\
V_{m,n} & = \sigma_{m,n}^2\Big\{
1+
\frac{A_{m,n}\phi(A_{m,n})-B_{m,n}\phi(B_{m,n})}{\Phi(B_{m,n})-\Phi(A_{m,n})}-
\Big[\frac{\phi(A_{m,n})-\phi(B_{m,n})}{\Phi(B_{m,n})-\Phi(A_{m,n})}\Big]^2
\Big\},
\end{align*}
with $\phi$ and $\Phi$ the $\nm(0,1)$ density and distribution functions, respectively, and 
\begin{align*}
A_{m,n} & = -\mu_{m,n}\sigma_{m,n}^{-1} =-\hat{\theta}_{m,n} (2\omega mn)^{1/2} \\ B_{m,n} & = (1-\mu_{m,n}) \sigma_{m,n}^{-1} =(1-\hat{\theta}_{m,n})(2\omega mn)^{1/2}.
\end{align*}
Since $\hat\theta_{m,n}$ is a consistent estimator of $\theta^\star$ (see below), we clearly have that $A_{m,n} \to -\infty$ and $B_{m,n} \to \infty$, so those ratios involving $\phi$ and $\Phi$ above are all $O_p(1)$.  Then we can immediately conclude that $V_{m,n} = O_p((mn)^{-1})$ which takes care of the variance term.  For the bias term, we first have that $\hat\theta_{m,n}$, the Mann--Whitney statistic, is an unbiased estimator of $\theta^\star$ and its variance is upper-bounded by 
\[ \frac{\theta^\star(1-\theta^\star) (m + n)}{mn}. \]
Therefore, for any $\eps > 0$, there exists a number $L=L_\eps$ such that 
\[ \prob\bigl( n^{1/2}|\hat\theta_{m,n} - \theta^\star| > L \bigr) \leq \eps. \]
To see this, use Chebyshev's inequality and the bound on the variance of $\hat\theta_{m,n}$ to get that the left-hand side above is upper-bounded by 
\[ \frac{\theta^\star(1-\theta^\star) (m + n)n}{L^2mn}. \]
Since $(m+n)/m \leq 2$ and $\theta^\star(1-\theta^\star) \leq 1/4$, we can take $L=L_\eps$ sufficiently large that the previous display is less than $\eps$.  This implies that $|\hat\theta_{m,n} - \theta^\star|$ and, hence, $|M_{m,n} - \theta^\star|$ is $O_p(n^{-1/2})$.  Putting everything together, we have that the right-hand side of \eqref{eq:chebyshev} is 
\[ \frac{n}{K_n^2} \{ O_p((mn)^{-1}) + O_p(n^{-1}) \} = O_p(K_n^{-2}). \]
But since $K_n \to \infty$, we have that the upper-bound in \eqref{eq:chebyshev} converges to 0 in $\prob$-probability as $(m,n) \to \infty$, proving the claim.

\bibliographystyle{apa}
\bibliography{Template}

\end{document}